\theoremstyle{break}
\begin{document}

\mainmatter  

\title{On the Information Privacy Model: the Group  and Composition Privacy}

\author{Genqiang Wu\footnote{This work was partially done when the author was a faculty member of the School of Information Engineering and was simultaneously a Ph.D candidate of the Institute of Software Chinese Academy of Sciences.}}
\institute{School of Computer Science and Engineering, Chongqing University of Technology 
\\
\email{wahaha2000@icloud.com}
}
\authorrunning{Genqiang Wu}

\maketitle

\begin{abstract}
How to query a dataset in the way of preserving the privacy of individuals whose data is included in the dataset is an important problem. 
The information privacy model, a variant of Shannon's information theoretic model to the encryption systems, protects the privacy of an individual by controlling the amount of information of the individual's data obtained by each adversary from the query's output. 
This model also assumes that each adversary's uncertainty to the queried dataset is not so small in order to improve the data utility.
In this paper, we prove some results to the group privacy and the composition privacy properties of this model, where the group privacy ensures a group of individuals' privacy is preserved, and where the composition privacy ensures multiple queries also preserve the privacy of an individual. Explicitly, we reduce the proof of the two properties to the estimation of the difference of two channel capacities.
Our proofs are greatly benefited from some information-theoretic tools and approaches. 
\end{abstract}

\begin{keywords}
data privacy protection, differential privacy, information privacy, group privacy, composition privacy, channel capacity
\end{keywords}

\newpage

\section{Introduction} \label{section-introduction}

Data privacy protection \cite{DBLP:journals/cacm/Dwork11,DBLP:journals/csur/FungWCY10,DBLP:series/ads/2008-34} studies how to query a dataset while preserving the privacy of individuals whose sensitive information is contained in the dataset. 
While our private data is increasingly collected and used, the data privacy protection becomes more and more important.  
Differential privacy model \cite{DBLP:conf/tcc/DworkMNS06,DBLP:conf/icalp/Dwork06} is currently the most important and popular data privacy protection model. This model has obtained great success in the computer science community\cite{DBLP:journals/fttcs/DworkR14}. 
It is very strong in protecting privacy than most other privacy models, such as $k$-anonymity\cite{DBLP:journals/ijufks/Sweene02}, $\ell$-diversity\cite{DBLP:conf/icde/MachanavajjhalaGKV06} or $t$-closeness\cite{DBLP:conf/icde/LiLV07} model etc.
It was even believed to be immune to any adversary with any background knowledge. However, with a decade or so of study, researchers found that the differential privacy model is vulnerable to those attacks where adversaries have knowledge of dependence among different records in the queried dataset\cite{DBLP:conf/sigmod/KiferM11}.    
Since then, many new privacy models are introduced to alleviate this defect of the differential privacy model, such as the Pufferfish model\cite{DBLP:journals/tods/KiferM14}, the coupled-worlds privacy model\cite{DBLP:conf/focs/BassilyGKS13}, the zero-knowledge privacy model\cite{DBLP:conf/tcc/GehrkeLP11}, the inferential privacy model\cite{DBLP:conf/innovations/GhoshK16} and the information privacy model\cite{wu-he-xia2018}, etc. Most of these models are inferential-based models; that is, an adversary's background knowledge is modeled as a probability distribution to the queried dataset and the aim of the adversary in general is to reduce the entropy of the probability distribution by analyzing the query results. These models  in general are more secure than the differential privacy model.
However, while the defect of the differential privacy model is removed, other problems arise: almost all of these models have a big weakness compared to the differential privacy model; that is, compared to the easily proved composition and group privacy properties of the differential privacy model, it is almost unknown if these models satisfy the two properties.
Note that the two properties are very important to each data privacy model, where the composition privacy property ensures that multiple queries also ensure each individual's privacy is protected and the group privacy property ensures that a group of individuals's privacy is also protected\cite{DBLP:journals/fttcs/DworkR14}.

In this paper we try to prove the group privacy and the composition privacy properties of the information privacy model\cite{wu-he-xia2018}. 
This model is a variant of Shannon's information-theoretic model to the encryption systems\cite{6769090}. 
Informally speaking, the information privacy model protect privacy by reducing the amount of disclosed information of each individual's data from query outputs obtained by each adversary.  
Furthermore, this model assumes that each adversary's uncertainty to the queried dataset is not so small.
This assumption is used to improve the data utility of the query's output, which is reasonable when the queried dataset is big enough.  One needs to be mentioned is that the differential privacy model is equivalent to the information privacy model when we assume that adversaries don't know the relationship among the records in the queried dataset\cite{wu-he-xia2018}. (Note that similar results also appear in many other papers, such as \cite{DBLP:journals/tods/KiferM14,DBLP:conf/ccs/LiQSWY13,DBLP:conf/innovations/GhoshK16}.)

\textbf{Our results} can be separated into two cases:
First, if there is no restriction on the adversaries' background knowledge, then the group privacy and the (basic) composition privacy properties of the information privacy model are almost the same with those of the differential privacy model, which are summarized as follows.
\begin{description}
\item [Group privacy:] We prove that if an algorithm satisfies $\epsilon$-information privacy, i.e. if each adversary can only obtain at most $\epsilon$ bits information of an individual from the algorithm's output, then the algorithm also ensures that each adversary can only obtain at most $k\epsilon$ bits information of a group of individuals  from the algorithm's output, where $k$ the size of the group.  
\item [Basic composition privacy:] We prove that if two algorithms satisfy respectively $\epsilon_1, \epsilon_2$-information privacy when they query \emph{the same dataset}, then the combination of the two algorithms' outputs also ensures that each adversary can only obtain at most $\epsilon_1+\epsilon_2$ bits information of an individual from the two outputs.
\item [General composition privacy:] We prove that if two algorithms satisfy respectively $\epsilon_1, \epsilon_2$-information privacy
 when they query respectively \emph{two different datasets}, then the combination of the two algorithms' outputs \emph{can't} ensure that each adversary can only obtain from the two outputs at most $\epsilon_1+\epsilon_2$ bits information of an individual. Note that this result is the first one appeared in the data privacy protection literatures. How to explain it is left as an open problem.
\end{description}
Second, 
if we assume that the lower bound of adversaries' uncertainties to the queried dataset is larger than zero,  
then the upper bounds of the disclosed information in the above three cases raise compared to the first case, where the amount of the increased information is positively related to the above lower bound of adversaries' uncertainties.
This phenomenon seems reasonable since the more larger restrictions to adversaries' knowledges, the more risky that the query outputs disclose information, especially in the scenarios of the group privacy and the composition privacy.

\section{The Information Privacy Model}

In this section, we restate the information privacy model introduced in \cite{wu-he-xia2018}. As mentioned in Section \ref{section-introduction}, this model is a variant of Shannon's information theoretic model to the encryption systems\cite{6769090}. In this paper most of the notations follow the book \cite{DBLP:books/daglib/0016881}. 

In an information privacy model there are $n\ge 1$ individuals and a dataset has $n$ records. 
Let the random variables $X_1, \ldots, X_n$ denote an adversary's probabilities/uncertainties to the $n$ records in the queried dataset. Let $\mathcal X_i$ denote the record universe of $X_i$.
A dataset $x:=(x_1,\ldots,x_n)$ is a sequence of $n$ records, where each $x_i$ is an assignment of $X_i$ and $x_i\in \mathcal X_i$.
Let $\mathcal X = \prod_{i\in [n]} \mathcal X_i$ where $[n]=\{1, \ldots, n\}$. Let $\mathbb P$ denote the universe of probability distributions over $\mathcal X$.
We abuse a capital letter, such as $X$, to either denote a random variable or denote the probability distribution which the random variable follows.  
Then if the probability distribution of the random variable $X$ is in $\Delta$, we say that $X$ is in $\Delta$, denoted as $X\in\Delta$.

\begin{definition}[The Knowledge of an Adversary]
Let the random vector $X:=(X_1, \ldots, X_n)$ denote the uncertainties/probabilities of an adversary to the queried dataset. Then $X$ or its probability distribution is called the knowledge of the adversary (to the dataset).
\end{definition}


In order to achieve more data utility, the information privacy model restricts adversaries' knowledges. 
Note that, by letting all adversaries' knowledges be derived from a subset $\Delta$ of $\mathbb P$, we achieve a restriction to adversaries' knowledges.
In this paper we assume that the entropy of each adversary's knowledge is not so small, which is formalized as the following assumption.

\begin{assumption} \label{assumption-1}
Let $b$ be a positive constant. Then, for any one adversary's knowledge $X$, there must be $X\in \mathbb P_b$, where
\begin{align}
\mathbb P_b = \{X: H(X)\ge b\}
\end{align}
with $H(X)$ being the entropy of $X$. 
\end{assumption}

For a query function $f$ over $\mathcal X$, let $\mathcal Y= \{f(x): x\in \mathcal X\}$ be its range. We now define the privacy channel/mechanism.
\begin{definition}[Privacy Channel/Mechanism]
To the function $f:\mathcal X \rightarrow \mathcal Y$,
we define a \emph{privacy channel/mechanism} to be a probability transition matrix $p(y|x)$ that expresses the probability of observing the output symbol $y\in \mathcal Y$ given that we query the dataset $x\in \mathcal X$.  
\end{definition}

To the above privacy channel $p(y|x)$, let $Y$ be its output random variable. We now define the individual channel capacity which is used to model the largest amount of information of an individual that an adversary can obtain from the output $Y$.  

\begin{definition}[Individual Channel Capacity]
To the function $f:\mathcal X \rightarrow \mathcal Y$ and its one privacy channel $p(y|x) $, we define the individual channel capacity of $p(y|x) $ with respect to $\Delta \subseteq \mathbb P$ as
\begin{align}
C_1=\max_{X\in \Delta, i\in [n]} I(X_i;Y),
\end{align}
where $X=(X_1,\ldots,X_n)$ and $I(X_i;Y)$ is the mutual information between $X_i$ and $Y$.
\end{definition}

The information privacy model protects privacy by controlling the individual channel capacity. 

\begin{definition}[Information Privacy] \label{definition-ip}
To the function $f:\mathcal X \rightarrow \mathcal Y$, we say that its one privacy channel $p(y|x) $ satisfies $\epsilon$-information privacy with respect to $\Delta$ if 
\begin{align}
C_1 \le \epsilon,
\end{align}
where $C_1$ is the individual channel capacity of $p(y|x)$ with respect to $\Delta$.
\end{definition}

If we set $\Delta=\mathbb P_b$ and assume that Assumption \ref{assumption-1} is true, then letting the channel $p(y|x)$ satisfy $\epsilon$-information privacy with respect to $\Delta$ will ensure that 
any adversary can only obtain at most $\epsilon$ bits information of each individual from the output of the channel.

We now define the (utility-privacy) balance function of a privacy channel. This function is used to express the ability of the privacy channel to preserve the privacy when adversaries' knowledges are reduced to $\mathbb P_b$ from $\mathbb P$ in order to improve the data utility. We stress that this function is very important in the following sections of the paper. 

\begin{definition}[Balance Function]  \label{definition-2}              
To the privacy channel $p(y|x)$, let $C_1, C_1^{b}$ be its individual channel capacities with respect to $\mathbb P, \mathbb P_b$, respectively. To each fixed $b$, assume there exists a nonnegative constant $\delta$ such that
\begin{align}
C_1 =C_1^b +\delta.
\end{align}
Then we say that the privacy channel is $(b,\delta)$-(utility-privacy) balanced, and that the function 
\begin{align}
\delta=\delta(b), \hspace{1cm} b\in [0,\log |\mathcal X|]
\end{align}
is the (utility-privacy) balance function of the privacy channel.
\end{definition}

The balance function is an increasing function. 

\begin{lemma}[Monotonicity of Balance Function] \label{lemma-4}
Let $\delta=\delta(b)$ be the balance function of the channel $p(y|x)$. Then the function is non-decreasing and therefore there is
\begin{align}
0=\delta(0)\le \delta(b)\le \delta(\log|\mathcal X|) < \min \left\{ b,\max_{i\in [n]}\log|\mathcal X_i| \right\}.
\end{align}
\end{lemma}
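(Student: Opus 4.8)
The plan is to read the whole statement off the single identity $\delta(b)=C_1-C_1^b$, where $C_1$ is the individual channel capacity over all of $\mathbb P$ and $C_1^b$ the one over $\mathbb P_b=\{X:H(X)\ge b\}$, and to exploit the elementary fact that the feasible set $\mathbb P_b$ only shrinks as the threshold $b$ grows. Since $C_1$ is a fixed constant, every assertion about $\delta$ is really an assertion about how $C_1^b$ moves with $b$.

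First I would settle the monotonicity and the boundary value. Because $H(X)\ge 0$ for every $X$, the constraint defining $\mathbb P_0$ is vacuous, so $\mathbb P_0=\mathbb P$, hence $C_1^0=C_1$ and $\delta(0)=0$. For $0\le b\le b'\le\log|\mathcal X|$ the nesting $\mathbb P_{b'}\subseteq\mathbb P_b\subseteq\mathbb P$ holds, a higher entropy threshold being a stronger constraint. Maximizing the same objective $\max_{i}I(X_i;Y)$ over a smaller set can only decrease it, so $C_1\ge C_1^b\ge C_1^{b'}\ge 0$; subtracting from the constant $C_1$ reverses the inequalities and yields $0=\delta(0)\le\delta(b)\le\delta(b')\le\delta(\log|\mathcal X|)$. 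This gives both the non-decrease of $\delta$ and the fact that its maximum on $[0,\log|\mathcal X|]$ is attained at the right endpoint, i.e. the entire chain apart from the final strict bound.

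Next I would bound the endpoint value $\delta(\log|\mathcal X|)=C_1-C_1^{\log|\mathcal X|}$ from above by two standard estimates. For any input $X$ and coordinate $i$ one has $I(X_i;Y)\le H(X_i)\le\log|\mathcal X_i|$, whence $C_1\le\max_{i\in[n]}\log|\mathcal X_i|$; trivially $C_1\le\log|\mathcal X|$ as well, so $C_1\le\min\{\log|\mathcal X|,\max_i\log|\mathcal X_i|\}$, which equals $\max_i\log|\mathcal X_i|$ since $\log|\mathcal X|=\sum_i\log|\mathcal X_i|$. Combined with $C_1^{\log|\mathcal X|}\ge 0$ (non-negativity of mutual information), this already delivers the non-strict form of the final bound, noting that at the endpoint $b=\log|\mathcal X|$ the quantity $\min\{b,\max_i\log|\mathcal X_i|\}$ collapses to $\max_i\log|\mathcal X_i|$.

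The hard part will be upgrading this last inequality from $\le$ to $<$. The natural route is a case split: if $C_1<\max_i\log|\mathcal X_i|$ strictly, then $\delta(\log|\mathcal X|)\le C_1$ finishes it; the delicate case is saturation $C_1=\max_i\log|\mathcal X_i|=\log|\mathcal X_{i^*}|$, where one must instead prove $C_1^{\log|\mathcal X|}>0$, that is, that the \emph{uniform} input — the unique element of $\mathbb P_{\log|\mathcal X|}$ — still leaks strictly positive information about some coordinate. This is exactly the step one cannot take for free: saturation forces a capacity-achieving law under which $X_{i^*}$ is perfectly recoverable from $Y$, and for channels like a parity (XOR) map this recoverability rests on correlations that disappear once the coordinates are made independent and uniform, driving $C_1^{\log|\mathcal X|}$ to $0$ and the bound to equality. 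I therefore expect the crux to be isolating a nondegeneracy hypothesis on the privacy channel — essentially that no coordinate is transmitted losslessly — under which either $C_1<\max_i\log|\mathcal X_i|$ or $C_1^{\log|\mathcal X|}>0$ can be forced. By contrast the $b=\log|\mathcal X|$ side of the minimum is unconditional whenever $n\ge 2$ with every $|\mathcal X_i|\ge 2$, since then $\delta(\log|\mathcal X|)\le C_1\le\max_i\log|\mathcal X_i|<\sum_i\log|\mathcal X_i|=\log|\mathcal X|$; so the whole difficulty concentrates on strictly beating $\max_i\log|\mathcal X_i|$.
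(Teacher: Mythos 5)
The paper states this lemma without any proof, so there is nothing to compare your argument against; your proposal has to stand on its own. The parts you do prove are correct and are surely the intended argument: $\mathbb P_0=\mathbb P$ gives $\delta(0)=0$, the nesting $\mathbb P_{b'}\subseteq\mathbb P_b$ gives monotonicity of $C_1^b$ and hence of $\delta$, and $I(X_i;Y)\le H(X_i)\le\log|\mathcal X_i|$ together with $C_1^{\log|\mathcal X|}\ge 0$ gives the non-strict endpoint bound $\delta(\log|\mathcal X|)\le\max_i\log|\mathcal X_i|$.

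The one thing you do not prove is the strict inequality, and you are right to stop there: your XOR example is not merely a difficulty but an outright counterexample to the lemma as stated. With $n=2$, $\mathcal X_1=\mathcal X_2=\{0,1\}$ and the deterministic channel $y=x_1\oplus x_2$, one gets $C_1=1$ (take $X_2$ a point mass and $X_1$ uniform) while the unique element of $\mathbb P_{\log|\mathcal X|}$, the uniform distribution, makes $Y$ independent of each $X_i$, so $C_1^{\log|\mathcal X|}=0$ and $\delta(\log|\mathcal X|)=1=\max_i\log|\mathcal X_i|$, violating the claimed strict bound. A nondegeneracy hypothesis of the kind you describe (or a weakening of $<$ to $\le$) is genuinely needed. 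Two further remarks. First, the chain as printed also fails at the other end of the minimum: the rightmost term contains the free variable $b$, and at $b=0$ it would force $\delta(\log|\mathcal X|)<0$, contradicting $\delta\ge 0$; your charitable reading (evaluating the minimum at the endpoint $b=\log|\mathcal X|$) is the only way to make sense of it, but you should flag the issue rather than silently adopt that reading. Second, in the saturation case you suggest one could alternatively try to show $C_1^{\log|\mathcal X|}>0$; note the XOR example kills that route too, so the case split cannot be closed without an extra hypothesis on the channel.
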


Clearly, the more larger $b$ is means that the more weaker adversaries are, which then implies the more larger data utility and the more larger $\delta$. Therefore, the parameter $b$ can be considered as an indicator of the data utility but the parameter $\delta$ can be considered as an indicator of the amount of the private information lost when the data utility has $b$ amount of increment.

\begin{theorem} \label{lemma-5}
Let $\delta=\delta(b)$ be the balance function of the privacy channel $p(y|x)$.
Then the privacy channel $p(y|x)$ satisfies $\epsilon$-information privacy with respect to $\mathbb P$ if and only if it satisfies $(\epsilon-\delta)$-information privacy with respect to $\mathbb P_b$.
\end{theorem}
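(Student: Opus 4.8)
The plan is to reduce both information-privacy conditions to plain numerical inequalities on the two individual channel capacities $C_1$ and $C_1^b$, and then exploit the defining relation of the balance function. By Definition \ref{definition-ip}, the channel $p(y|x)$ satisfies $\epsilon$-information privacy with respect to $\mathbb P$ exactly when $C_1 \le \epsilon$, where $C_1 = \max_{X\in\mathbb P,\, i\in[n]} I(X_i;Y)$ is the individual channel capacity with respect to $\mathbb P$. Likewise, it satisfies $(\epsilon-\delta)$-information privacy with respect to $\mathbb P_b$ exactly when $C_1^b \le \epsilon - \delta$, where $C_1^b$ is the individual channel capacity with respect to $\mathbb P_b$. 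So the whole statement is about comparing these two threshold conditions.

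First I would record the key identity supplied by Definition \ref{definition-2}: for the fixed value of $b$ under consideration there is a nonnegative constant $\delta = \delta(b)$ with $C_1 = C_1^b + \delta$, equivalently $C_1^b = C_1 - \delta$. Nonnegativity, together with the bound $\delta < \min\{b,\, \max_{i\in[n]} \log|\mathcal X_i|\}$, is guaranteed by Lemma \ref{lemma-4}; it is not actually needed for the logical equivalence itself, only to make the two thresholds meaningful.

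The equivalence then follows by a single substitution, which I would present in both directions for clarity. For the forward direction, assume $C_1 \le \epsilon$; subtracting $\delta$ gives $C_1^b = C_1 - \delta \le \epsilon - \delta$, so $(\epsilon-\delta)$-information privacy with respect to $\mathbb P_b$ holds. For the converse, assume $C_1^b \le \epsilon - \delta$; adding $\delta$ gives $C_1 = C_1^b + \delta \le \epsilon$, so $\epsilon$-information privacy with respect to $\mathbb P$ holds. Chaining these shows the two conditions are equivalent.

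I do not expect a genuine obstacle here: all the substantive work has already been absorbed into the existence statement for the balance function in Definition \ref{definition-2}, so the theorem is essentially a reformulation of that definition. The only point worth stating carefully is that $\delta$ is a constant attached to the channel and to the fixed $b$, \emph{not} a quantity that varies with $\epsilon$; once this is clear, the forward and backward implications are perfectly symmetric and the result is immediate.
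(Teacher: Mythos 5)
Your proof is correct, and since the paper states Theorem \ref{lemma-5} without any proof, your argument supplies exactly the intended (implicit) reasoning: both privacy conditions unfold to capacity inequalities, and the defining identity $C_1 = C_1^b + \delta$ of the balance function makes the two thresholds equivalent by a single substitution in each direction. Nothing further is needed.
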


Theorem \ref{lemma-5} is very useful since it turns hard problems into relatively easy problems. For example, if we want to construct an $\epsilon$-information privacy channel with respect to $\mathbb P_b$, we only need to construct an $(\epsilon+\delta)$-information privacy channel with respect to $\mathbb P$, where $\delta=\delta(b)$ is the balance function of the channel.
Furthermore, if the value of $\delta$ is relatively small, then Theorem \ref{lemma-5} implies that restricting the adversaries' knowledge from $\mathbb P$ to $\mathbb P_b$ doesn't significant change the model's ability to protect privacy even if Assumption \ref{assumption-1} is not true.

\section{Group Privacy}

The group privacy problem is to study whether privacy channels protect the privacy of a group of individuals. In this section we first formalize the group privacy definition and then prove a result to the group privacy.

\begin{definition}[Group Channel Capacity]
To the function $f:\mathcal X \rightarrow \mathcal Y$ and its one privacy channel $p(y|x) $, we define the $k$-group channel capacity of $p(y|x) $ with respect to $\Delta \subseteq \mathbb P$ as
\begin{align}
C_k=\max_{X\in \Delta, I\subseteq [n]: |I|=k} I(X_I;Y),
\end{align}
where $X_I=(X_{i_1},\ldots,X_{i_k})$ with $I=\{i_1,\ldots, i_k\}\subseteq [n]$.
\end{definition}

\begin{definition}[Group Privacy]
To the function $f:\mathcal X \rightarrow \mathcal Y$, assume its one privacy channel $p(y|x)$ satisfies $\epsilon$-information privacy with respect to $\Delta$.
We say that $p(y|x) $ satisfies $c$-group privacy with respect to $\Delta$ if 
\begin{align}
\max_{k\in [n]}C_k \le k(\epsilon+c),
\end{align}
where $C_k$ is the $k$-group channel capacity of $p(y|x)$ with respect to $\Delta$, $c$ is a nonnegative constant. 
\end{definition}

\begin{lemma} \label{lemma-2}
Let $X_I=(X_{I_1},X_{I_2})$ with $I=I_1\cup I_2$ and $I_1\cap I_2=\emptyset$. Then
there exists $|\mathcal X_{I_1}|$ probability distributions $\{X^{x_{I_1}}:x_{I_1}\in \mathcal X_{I_1}\}$ on $\mathcal X$ such that
\begin{align}
I(X_I;Y) = I(X_{I_1};Y) + \sum_{x_{I_1}} p(x_{I_1}) I(X_{I_2}^{x_{I_1}};Y^{x_{I_1}}),
\end{align}
where $X_{I_1} \sim p(x_{I_1})$, and
where each $Y^{x_{I_1}}$ is the corresponding output random variable of the channel $p(y|x)$ when the input random variable is $X^{x_{I_1}} \in \mathbb P$.
\end{lemma}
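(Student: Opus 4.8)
The plan is to derive the identity directly from the chain rule for mutual information, reading the resulting conditional mutual information as an average of ordinary mutual informations taken over reconditioned input distributions. First I would apply the chain rule to the split $X_I=(X_{I_1},X_{I_2})$, which gives
\[
I(X_I;Y) = I(X_{I_1};Y) + I(X_{I_2};Y \mid X_{I_1}).
\]
This already produces the first term on the right-hand side of the claimed identity, so the entire problem reduces to showing that the conditional mutual information $I(X_{I_2};Y \mid X_{I_1})$ equals the claimed sum $\sum_{x_{I_1}} p(x_{I_1})\, I(X_{I_2}^{x_{I_1}};Y^{x_{I_1}})$.

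Next I would expand the conditional mutual information by its definition as an average over the conditioning variable,
\[
I(X_{I_2};Y \mid X_{I_1}) = \sum_{x_{I_1}} p(x_{I_1})\, I(X_{I_2};Y \mid X_{I_1}=x_{I_1}),
\]
where $X_{I_1}\sim p(x_{I_1})$, so it remains to identify each summand with $I(X_{I_2}^{x_{I_1}};Y^{x_{I_1}})$. To that end, for each fixed $x_{I_1}\in\mathcal X_{I_1}$ I would define $X^{x_{I_1}}$ to be the conditional distribution of the full input vector $X$ given $X_{I_1}=x_{I_1}$. This is manifestly a probability distribution on $\mathcal X$, hence $X^{x_{I_1}}\in\mathbb P$, and there are exactly $|\mathcal X_{I_1}|$ of them. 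Under $X^{x_{I_1}}$ the coordinates indexed by $I_1$ are frozen to $x_{I_1}$, while the remaining coordinates, those in $I_2$ as well as those in $[n]\setminus I$, follow the corresponding conditional law; in particular $X_{I_2}^{x_{I_1}}$ is the conditional marginal $p(x_{I_2}\mid x_{I_1})$.

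The crux of the argument, and the step I expect to require the most care, is verifying that passing $X^{x_{I_1}}$ through the \emph{same} channel $p(y|x)$ reproduces exactly the conditional joint law of $(X_{I_2},Y)$ given $X_{I_1}=x_{I_1}$. Here one must keep in mind that $Y$ depends on the entire dataset and not only on $X_I$, so the marginalization over the coordinates in $[n]\setminus I$ has to be accounted for. Because $X^{x_{I_1}}$ retains the full conditional distribution over all coordinates outside $I_1$, the output law $Y^{x_{I_1}}$ is obtained by the identical marginalization $p(y)=\sum_x p_{X^{x_{I_1}}}(x)\, p(y|x)$ that governs $Y$ under conditioning on $X_{I_1}=x_{I_1}$. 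Consequently the joint distribution of $(X_{I_2}^{x_{I_1}},Y^{x_{I_1}})$ coincides with the conditional joint distribution of $(X_{I_2},Y)$ given $X_{I_1}=x_{I_1}$, whence
\[
I(X_{I_2}^{x_{I_1}};Y^{x_{I_1}}) = I(X_{I_2};Y \mid X_{I_1}=x_{I_1})
\]
summand by summand. Substituting this back into the expanded conditional mutual information and then into the chain-rule identity yields the stated equation, which completes the argument.
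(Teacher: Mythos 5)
Your proposal is correct and follows essentially the same route as the paper's proof: chain rule for mutual information, then for each fixed $x_{I_1}$ constructing the input distribution that freezes $X_{I_1}$ at $x_{I_1}$ and gives the remaining coordinates their conditional law, and verifying that passing it through $p(y|x)$ reproduces the conditional joint law of $(X_{I_2},Y)$ given $X_{I_1}=x_{I_1}$. The only difference is cosmetic: the paper writes out the marginalization computations explicitly for the special case $I_1=\{1\}$, $I_2=\{2\}$, whereas you argue the general case directly.
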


\begin{proof}
We only prove the case of $I_1=\{1\}, I_2=\{2\}$. Other cases can be proved similarly.

By the chain rule of the mutual information\cite[Theorem 2.5.2]{DBLP:books/daglib/0016881}, we have 
\begin{align}
I(X_I; Y) = I(X_1;Y)+I(X_2;Y|X_1)=I(X_1;Y)+\sum_{x_1} p(x_1) I(X_2;Y|x_1). 
\end{align}
Then, in order to prove the claim, to the fixed $x_1\in \mathcal X_1$ we only need to show that there exists a probability distribution $\tilde X \sim q(x)\in \mathbb P$ such that 
\begin{align}
I(\tilde X_2;\tilde Y)=I(X_2;Y|x_1),
\end{align}
where $\tilde Y$ is the corresponding output random variable of the channel $p(y|x)$.
To the fixed $x_1$, we construct the probability distribution $q(x)\in \mathbb P$ as follows. Let $q(x_1)$ be the probability distribution of $X_1$.
Set $q(x_{1})=1$ and $q(x_1')=0$ for $x_1'\in\mathcal X_1\setminus\{x_1\}$. 
Set $q(x_{(1)}|x_1)=p(x_{(1)}|x_1)  $ and $q(x) = q(x_1)q(x_{(1)}|x_1)$ for all $x=(x_1,x_{(1)}) \in \mathcal X$,
where $(i) = [n]-i$. Then it is easy to verify that $q(x)\in \mathbb P$. Let $\tilde X \sim q(x)$ and let $\tilde Y \sim q(y)$ be the output random variable when the source follows $\tilde X$ and when the channel is $p(y|x)$. Then
\begin{align}
q(y)=\sum_{x} q(x)p(y|x) =& \sum_{x_1'} q(x_1')\sum_{x_{(1)}}p(y|x_1',x_{(1)}) q(x_{(1)}|x_1')\\
=&\sum_{x_{(1)}}p(y|x_1,x_{(1)})  p(x_{(1)}|x_1) =p(y|x_1) 
\end{align}
and
\begin{align}
q(x_2,y) =& \sum_{x_{(2)}} p(y|x_2,x_{(2)})q(x_2,x_{(2)}) 
\\
=&\sum_{x_1'}q(x_1') \sum_{x_{(I)}} p(y|x_{1}',x_2,x_{(I)})q(x_{2},x_{(I)}|x_1') \\
=& \sum_{x_{(I)}} p(y|x_{1},x_2,x_{(I)})p(x_{2},x_{(I)}|x_1)  =p(x_2,y|x_1)
\end{align}
Then
\begin{align}
I(\tilde X_2;\tilde Y) =& \sum_{x_2,y}q(x_2,y)\log \frac{q(x_2,y)}{q(x_2)q(y)}\\
=& \sum_{x_2,y}p(x_2,y|x_1)\log\frac{p(y|x_1,x_2)}{p(y|x_1)}=I(X_2;Y|x_1).
\end{align}

The claim is proved.
\qed
\end{proof}


\begin{theorem}  \label{theorem-2}
Let the privacy channel $p(y|x)$ be $(b,\delta)$-balanced.
Then $p(y|x)$ satisfies $\delta$-group privacy  with respect to $\mathbb P_b$. 
\end{theorem}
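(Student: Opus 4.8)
The plan is to reduce the asserted group privacy to a single clean inequality and then establish that inequality by induction on the group size $k$, using Lemma \ref{lemma-2} as the inductive step. First I would unwind the definitions. Since $p(y|x)$ is $(b,\delta)$-balanced we have $C_1 = C_1^b + \delta$, where $C_1, C_1^b$ are the individual channel capacities with respect to $\mathbb P, \mathbb P_b$. Taking $\epsilon = C_1^b$, the channel trivially satisfies $\epsilon$-information privacy with respect to $\mathbb P_b$, and the target bound $C_k \le k(\epsilon+\delta)$ for all $k\in[n]$ becomes
\begin{align}
C_k^b \le k\,C_1 \qquad \text{for all } k \in [n],
\end{align}
where $C_k^b$ is the $k$-group channel capacity with respect to $\mathbb P_b$. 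So it suffices to prove this inequality.

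Since $\mathbb P_b \subseteq \mathbb P$, it is in fact enough to prove the stronger statement that $I(X_I; Y) \le k\,C_1$ for every $X \in \mathbb P$ and every $I \subseteq [n]$ with $|I| = k$; taking the maximum over $X \in \mathbb P_b$ and such $I$ then yields $C_k^b \le k\,C_1$. I would prove this by induction on $k$. The base case $k = 1$ is exactly the definition of $C_1$ as the individual channel capacity over $\mathbb P$: for any $X \in \mathbb P$ and any $i$, $I(X_i; Y) \le C_1$. For the inductive step, fix $X \in \mathbb P$ and $I$ with $|I| = k$, write $I = \{i_1\} \cup I'$ with $|I'| = k-1$, and apply Lemma \ref{lemma-2} with $I_1 = \{i_1\}$, $I_2 = I'$ to obtain
\begin{align}
I(X_I; Y) = I(X_{i_1}; Y) + \sum_{x_{i_1}} p(x_{i_1})\, I\!\left(X_{I'}^{x_{i_1}}; Y^{x_{i_1}}\right).
\end{align}
The leading term is at most $C_1$ by the base case. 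For each $x_{i_1}$, Lemma \ref{lemma-2} guarantees that the constructed distribution $X^{x_{i_1}}$ lies in $\mathbb P$, so the induction hypothesis applied to $X^{x_{i_1}}$ and the group $I'$ of size $k-1$ gives $I(X_{I'}^{x_{i_1}}; Y^{x_{i_1}}) \le (k-1)C_1$. Since the weights $p(x_{i_1})$ sum to $1$, the second term is at most $(k-1)C_1$, and adding the two bounds yields $I(X_I; Y) \le k\,C_1$, completing the induction. Combining the two parts gives $C_k^b \le k\,C_1 = k(C_1^b + \delta) = k(\epsilon + \delta)$ for every $k$, which is precisely $\delta$-group privacy with respect to $\mathbb P_b$.

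I expect the main obstacle to be conceptual rather than computational: one must bound every individual term in the decomposition by $C_1$, the capacity over the \emph{full} set $\mathbb P$, and \emph{not} by $C_1^b$. This is forced because the distributions $X^{x_{i_1}}$ produced by Lemma \ref{lemma-2} pin the coordinate $x_{i_1}$ to a fixed value, so they typically have entropy below $b$ and fall outside $\mathbb P_b$; the induction hypothesis therefore cannot be stated over $\mathbb P_b$ and must be carried over all of $\mathbb P$. This gap between $C_1$ and $C_1^b$ is exactly the source of the extra $\delta$ contributed by each of the $k$ group members, and tracking it correctly, rather than collapsing to the naive bound $k\,C_1^b$, is the crux of the argument.
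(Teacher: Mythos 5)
Your proof is correct and follows the same route the paper intends: the paper's own proof is the one-line assertion that the theorem is a direct corollary of Lemma \ref{lemma-2} and Theorem \ref{lemma-5}, and your induction on $k$ via Lemma \ref{lemma-2}, combined with the balance relation $C_1 = C_1^b + \delta$ (which is the content of Theorem \ref{lemma-5}), supplies exactly the details that one-liner omits. Your observation that the conditioned distributions $X^{x_{i_1}}$ generally fall outside $\mathbb P_b$, forcing the induction hypothesis to range over all of $\mathbb P$ and thereby contributing one $\delta$ per group member, correctly identifies the crux that the paper glosses over.
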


\begin{proof}
The claim is a direct corollary of Lemma \ref{lemma-2} and Theorem \ref{lemma-5}.
\qed
\end{proof}

\section{Composition Privacy}

The composition privacy problem is to study whether the privacy channels protect each individual's privacy while multiple datasets or multiple query results are output.
There are two kinds of scenarios as discussed in \cite{wu-he-xia2018}.
 First, multiple query results of \emph{one dataset} are output. We call this kind of scenario as \emph{the basic composition privacy problem}. To the differential privacy model, the privacy problem of this scenario is treated by the \emph{composition privacy property}\cite{DBLP:journals/fttcs/DworkR14,DBLP:conf/sigmod/ZhangXX16,DBLP:journals/pvldb/ChenMFDX11,DBLP:conf/kdd/MohammedCFY11}. 
 
 Second, multiple query results of \emph{multiple datasets} generated by the same group of individuals are output, respectively. We call this kind of scenario as \emph{the general composition privacy problem}.
For example, the independent data publications of datasets of the Netflix and the IMDb \cite{DBLP:conf/sp/NarayananS08} respectively, the independent data publications of the online behaviors and offline retailing data \cite{DBLP:conf/kdd/LuoYLSYH16} respectively, and the independent data publications of the voter registration data and the medical data \cite{DBLP:journals/ijufks/Sweene02} respectively. For each of the above applications, the \emph{composition attack} \cite{DBLP:conf/kdd/GantaKS08,DBLP:conf/sp/NarayananS08} techniques may employ the relationship between/among different datasets/queries to infer the privacy of individuals whose data is contained in these datasets.

\subsection{Basic Composition Privacy}

\begin{definition}[Basic Composition Privacy] \label{definition-1}
To the function $f_j:\mathcal X \rightarrow \mathcal Y_j$, assume its one privacy channel $p(y_j|x)$ satisfies $\epsilon_j$-information privacy with respect to $\Delta$ for $j \in [m]$. 
We say that the composition channel $p(y|x)$ satisfies  $c$-(basic) composition privacy with respect to $\Delta$ if 
\begin{align}
C_1
\le \sum_{j\in [m]}\epsilon_j + c,
\end{align}
where
\begin{align}
p(y|x) = \prod_{j\in [m]}p(y_j|x),
\end{align}
$Y=(Y_1, \ldots, Y_m)$ and $y=(y_1,\ldots,y_m)$, and where $C_1$ is the individual channel capacity of $p(y|x)$ with respect to $\Delta$ and $c$ is a nonnegative constant. 
\end{definition}

\begin{lemma} \label{lemma-1}
Let the notations be as shown in  Definition \ref{definition-1} and set $m=2$.
Then there exist $|\mathcal Y_2|$ probability distributions $\{X^{y_2}:y_2\in \mathcal Y_2\}$ on $\mathcal X$ such that
\begin{align}
I(X_i;Y) = I(X_i;Y_2) + \sum_{y_2}p(y_2) I(X_i^{y_2};Y_1^{y_2}), i\in [n],
\end{align}
where each $Y_1^{y_2}$ is the output random variable of the channel $p(y|x)$ when the input random variable is $X^{y_2} \in \mathbb P$, and where
\begin{align}
p(y|x) = p(y_1|x)p(y_2|x).
\end{align}
\end{lemma}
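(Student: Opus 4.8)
The goal is to decompose the mutual information $I(X_i;Y)$ where $Y=(Y_1,Y_2)$ is the combined output of two channels $p(y_1|x)$ and $p(y_2|x)$ applied to the \emph{same} input $X$, with $p(y|x)=p(y_1|x)p(y_2|x)$ expressing that the two channels act independently given the dataset $x$. The target identity splits $I(X_i;Y)$ into the information leaked by the second channel alone, $I(X_i;Y_2)$, plus an average over $y_2$ of a conditional leakage term $I(X_i^{y_2};Y_1^{y_2})$. This has exactly the same flavor as Lemma~\ref{lemma-2}: the plan is to apply the chain rule for mutual information to peel off one output variable and then reinterpret the resulting \emph{conditional} mutual information as an \emph{unconditional} mutual information of a reparametrized channel.

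\textbf{The plan.} First I would apply the chain rule \cite[Theorem 2.5.2]{DBLP:books/daglib/0016881} in the form
\begin{align}
I(X_i;Y)=I(X_i;Y_1,Y_2)=I(X_i;Y_2)+I(X_i;Y_1\mid Y_2),
\end{align}
which immediately produces the leading term $I(X_i;Y_2)$. Expanding the conditional term by the definition of conditional mutual information gives
\begin{align}
I(X_i;Y_1\mid Y_2)=\sum_{y_2}p(y_2)\,I(X_i;Y_1\mid y_2),
\end{align}
so the remaining task is to show that for each fixed $y_2\in\mathcal Y_2$ the conditioned quantity $I(X_i;Y_1\mid y_2)$ equals $I(X_i^{y_2};Y_1^{y_2})$ for a suitably constructed input distribution $X^{y_2}\in\mathbb P$ with output $Y_1^{y_2}$.

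\textbf{The core construction.} Mirroring the proof of Lemma~\ref{lemma-2}, for each fixed $y_2$ I would define a new distribution on $\mathcal X$ by Bayesian updating on the observation $Y_2=y_2$, namely $q^{y_2}(x):=p(x\mid y_2)$, and declare $X^{y_2}\sim q^{y_2}$. The key verification is that conditioning the source on $Y_2=y_2$ and then passing through channel $p(y_1\mid x)$ reproduces the conditional joint law $p(x_i,y_1\mid y_2)$; here the crucial input is the conditional-independence structure $p(y_1,y_2\mid x)=p(y_1\mid x)p(y_2\mid x)$, which guarantees that $p(y_1\mid x,y_2)=p(y_1\mid x)$, so that the channel acting on $X^{y_2}$ is still the original $p(y_1\mid x)$. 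A short computation of the marginals $q^{y_2}(y_1)$ and $q^{y_2}(x_i,y_1)$ then shows $I(X_i^{y_2};Y_1^{y_2})=I(X_i;Y_1\mid y_2)$, and summing over $y_2$ with weights $p(y_2)$ completes the identity.

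\textbf{Main obstacle.} The only delicate point I anticipate is verifying that the updated distribution $q^{y_2}$ indeed lies in $\mathbb P$ and that the reparametrized channel is genuinely the original one rather than a $y_2$-dependent distortion; this is exactly where the product form $p(y|x)=p(y_1|x)p(y_2|x)$ must be used, since without conditional independence the channel seen after conditioning on $Y_2=y_2$ would absorb a factor depending on $y_2$ and the clean identity would fail. Everything else is a routine manipulation of marginals paralleling Lemma~\ref{lemma-2}.
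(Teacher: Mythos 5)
Your proposal is correct and follows essentially the same route as the paper: chain rule to peel off $Y_2$, then realize each conditional term $I(X_i;Y_1\mid y_2)$ as an unconditional mutual information by feeding the posterior $p(x\mid y_2)$ into the channel $p(y_1\mid x)$, using the product form $p(y\mid x)=p(y_1\mid x)p(y_2\mid x)$ to keep the channel unchanged. The paper writes the updated distribution in the factored form $q(x)=q(x_1)q(x_2\mid x_1)$, but this is exactly your Bayesian update $p(x\mid y_2)$, so the two arguments coincide.
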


\begin{proof}
The proof is similar to the proof of Lemma \ref{lemma-2}.
We only prove the case of $i=1$ and $n=2$, other cases can be proved similarly. 

By the chain rule of the mutual information, we have 
\begin{align}
I(X_1; Y) = I(X_1;Y_2)+I(X_1;Y_1|Y_2)=I(X_1;Y_2)+\sum_{y_2} p(y_2) I(X_1;Y_1|y_2). 
\end{align}
Then, in order to prove the claim, to the fixed $y_2$, we only need to show that there exists a probability distribution $\tilde X \sim q(x)\in \mathbb P$ such that 
\begin{align}
I(\tilde X_1;\tilde Y_1)=I(X_1;Y_1|y_2),
\end{align}
where $\tilde Y_1$ is the corresponding output random variable of the channel $p(y|x)$ when the input random variable is $\tilde X$.
To the fixed $y_2$, we construct the probability distribution $q(x)\in \mathbb P$ as follows.

Set $q(x) = q(x_1)q(x_2|x_1)$ with $q(x_1)=\frac{p(x_1)p(y_2|x_1)}{p(y_2)}=p(x_1|y_2), q(x_2|x_1)= \frac{p(y_2|x_1,x_2)p(x_2|x_1)}{p(y_2|x_{1})}$ and $q(y_1|x_1)=\sum_{x_2}p(y_1|x_1,x_2)q(x_2|x_1)$
 for all $x=(x_1,x_2) \in \mathcal X$. Then it is easy to verify that $q(x)\in \mathbb P$. Let $\tilde X$ follow $q(x)$ and let $\tilde Y_1 $ follow $ q(y_1)$, where $\tilde Y_1$ is the output random variable of the channel  $p(y_1|x)$ when the source is $\tilde X$. Then
\begin{align}
q(y_1)=\sum_{x} p(y_1|x)q(x) =& \sum_{x}p(y_1|x) \frac{p(y_2|x)p(x)}{p(y_2)}  \\
=&\frac{\sum_{x}p(y_1|x) p(y_2|x)p(x)}{p(y_2)} = p(y_1|y_2)
\end{align}
and
\begin{align}
q(x_1,y_1) =& \sum_{x_{2}} p(y_1|x_2,x_{1})q(x_2,x_{1})  \\
=&\sum_{x_{2}} p(y_1|x_2,x_{1})\frac{p(y_2|x_2,x_{1})p(x_2,x_{1})}{p(y_2)}  \\
=& \frac{\sum_{x_{2}} p(y_1|x_2,x_{1})p(y_2|x_2,x_{1})p(x_2,x_{1})}{p(y_2)}
= p(x_1,y_1|y_2).
\end{align}
Hence
\begin{align}
I(\tilde X_1;\tilde Y_1) =& \sum_{x_1,y_1}q(x_1,y_1)\log \frac{q(x_1,y_1)}{q(x_1)q(y_1)}\\
=& \sum_{x_1,y_1}p(y_1,x_1|y_2)\log \frac{p(y_1,x_1|y_2)}{p(x_1|y_2)p(y_1|y_2)}
=I(X_1;Y_1|y_2).
\end{align}

The claim is proved.
\qed
\end{proof}

\begin{theorem} \label{theorem-1}
Let the notations be as shown in  Definition \ref{definition-1}.
Let the privacy channel $p(y_j|x)$ be $(b,\delta_j)$-balanced, $j\in [m]$.
Then the composition channel $p(y|x)$ satisfies $\sum_{j\in [m]}\delta_j$-composition privacy  with respect to $\mathbb P_b$. 
\end{theorem}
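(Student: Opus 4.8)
The plan is to reduce Theorem \ref{theorem-1} to the combination of Lemma \ref{lemma-1} and Theorem \ref{lemma-5}, exactly in the spirit of the very short proof just given for the group-privacy Theorem \ref{theorem-2}. The target inequality is a bound on the individual channel capacity $C_1$ of the composition channel $p(y|x)=\prod_{j\in[m]}p(y_j|x)$ with respect to $\mathbb P_b$; by Definition \ref{definition-1} it suffices to show that for every admissible knowledge $X\in\mathbb P_b$ and every coordinate $i\in[n]$ one has $I(X_i;Y)\le \sum_{j\in[m]}(\epsilon_j+\delta_j)$, and then recognize $\sum_j\epsilon_j$ as the hypothesized information-privacy budget and $\sum_j\delta_j$ as the claimed composition-privacy constant $c$.

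First I would treat $m=2$ and iterate, since Lemma \ref{lemma-1} is stated for two channels. Fix $X\sim p(x)\in\mathbb P_b$ and $i\in[n]$. Applying Lemma \ref{lemma-1}, I get the decomposition
\begin{align}
I(X_i;Y)=I(X_i;Y_2)+\sum_{y_2}p(y_2)\,I(X_i^{y_2};Y_1^{y_2}),
\end{align}
where each conditional term $I(X_i^{y_2};Y_1^{y_2})$ is realized by an \emph{honest} input distribution $X^{y_2}\in\mathbb P$ fed into the channel $p(y_1|x)$. The first summand $I(X_i;Y_2)$ is bounded by the individual channel capacity of $p(y_2|x)$ with respect to $\mathbb P_b$, which by hypothesis is at most $\epsilon_2$.

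The crux of the argument, and the step I expect to be the main obstacle, is controlling the weighted sum $\sum_{y_2}p(y_2)\,I(X_i^{y_2};Y_1^{y_2})$. Here each $X^{y_2}$ lies in $\mathbb P$ rather than in the restricted class $\mathbb P_b$, so I \emph{cannot} bound each term by $\epsilon_1$ directly; instead each term is at most the individual capacity of $p(y_1|x)$ with respect to all of $\mathbb P$, namely $C_1(p(y_1|x);\mathbb P)$. This is precisely where Theorem \ref{lemma-5} enters: since $p(y_1|x)$ is $(b,\delta_1)$-balanced, its capacity with respect to $\mathbb P$ equals its capacity with respect to $\mathbb P_b$ plus $\delta_1$, and the latter is at most $\epsilon_1$ by the information-privacy hypothesis. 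Hence $C_1(p(y_1|x);\mathbb P)\le \epsilon_1+\delta_1$, and because $\sum_{y_2}p(y_2)=1$ the whole weighted average is bounded by $\epsilon_1+\delta_1$. Combining gives $I(X_i;Y)\le \epsilon_2+(\epsilon_1+\delta_1)$.

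The only remaining subtlety is symmetry: the decomposition above spends the balance loss $\delta_1$ on the inner channel but none on the outer channel $p(y_2|x)$, so a naive reading yields $\epsilon_1+\epsilon_2+\delta_1$ rather than the symmetric $\epsilon_1+\epsilon_2+\delta_1+\delta_2$ claimed. To finish I would note that the theorem's bound $\sum_j(\epsilon_j+\delta_j)$ is an upper bound that dominates $\epsilon_2+\epsilon_1+\delta_1$ since $\delta_2\ge 0$ by Lemma \ref{lemma-4}, so the inequality holds as stated; for $m>2$ I would apply the $m=2$ case inductively, splitting off one factor at a time, each split contributing its own $\epsilon_j+\delta_j$ to the running bound. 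Taking the maximum over $X\in\mathbb P_b$ and $i\in[n]$ then yields $C_1\le\sum_{j\in[m]}\epsilon_j+\sum_{j\in[m]}\delta_j$, which is exactly $\sum_j\delta_j$-composition privacy by Definition \ref{definition-1}.
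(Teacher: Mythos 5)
Your proposal is correct and is exactly the argument the paper intends: the paper's proof is just the one-line citation of Lemma \ref{lemma-1} and Theorem \ref{lemma-5}, and your expansion (bound the outer term by $\epsilon_2$ over $\mathbb{P}_b$, bound the inner averaged terms by the capacity over all of $\mathbb{P}$, which Theorem \ref{lemma-5} converts to $\epsilon_1+\delta_1$) is the natural way to fill in that citation. Your observation that the argument actually yields the asymmetric and slightly tighter bound $\epsilon_1+\epsilon_2+\delta_1$ is a correct and worthwhile remark, but it does not change the approach.
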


\begin{proof}
The claim is a direct corollary of Lemma \ref{lemma-1} and Theorem \ref{lemma-5}.
\qed
\end{proof}

\subsection{General Composition Privacy}

We first formalize the general composition privacy problem. 
Set  $Y=(Y_1,\ldots,Y_m)$ and 
\begin{align}
X=(X_1,\ldots,X_n) = 
\begin{pmatrix}
X^1\\
\vdots \\
X^m
\end{pmatrix}
=\begin{pmatrix}
X_1^1 & \ldots & X_n^1 \\
\vdots & \ddots & \vdots \\
X_1^m & \ldots & X_n^m 
\end{pmatrix},
\end{align}
where 
\begin{align}
X_i =(X_i^1,\ldots, X_i^m)^T \mbox{\hspace{1cm} and \hspace{1cm}} X^j=(X_1^j, \ldots, X_n^j),
\end{align}
for $i\in [n], j \in [m]$.\footnote{One needs to be mentioned is that by convention the random variable $X_i^j$ should be denoted as $X_{ij}$. We choose  the former notation due to the notational abbreviation.} The notations $x_i, x^j, x_i^j, x, \mathcal X_i, \mathcal X^j, \mathcal X_i^j$ and $\mathcal X$ are defined accordingly. This implies that, to the general composition privacy problems, the input random variable $X$ is modeled as a random matrix, in which the column random vector $X_i$ is used to model an adversary's knowledge to the all $m$ records in the $m$ datasets of the $i$th individual and
the row random vector $X^j$ is used to model the adversary's knowledge to the $j$th dataset.  
Let $\mathbb P^j$ denote the universe of probability distributions over $\mathcal X^j$ for $j\in [m]$ and let $\mathbb P=\prod_{j\in [m]}\mathbb P^j$.

\begin{definition}[General Composition Privacy] \label{definition-general-composition}
To the function $f_j:\mathcal X^j \rightarrow \mathcal Y_j$, we assume its one privacy channel $p(y_j|x^j)$ satisfies $\epsilon_j$-information privacy with respect to $\Delta^j \subseteq \mathbb P^j$ for $j \in [m]$. 
We say that the composition channel $p(y|x) $ satisfies $c$-general composition privacy with respect to $\Delta=\prod_{j\in [m]}\Delta^j$ if 
\begin{align}
C_1 = \max_{X\in \Delta, i\in [n]}I(X_i;Y) \le \sum_{j\in [m]}\epsilon_j + c ,
\end{align}
where 
\begin{align}
p(y|x) =\prod_{j\in [m]} p(y_j|x^j),
\end{align}
and where $C_1$ is the individual channel capacity of $p(y|x)$ with respect to $\Delta$.
\end{definition}

\begin{lemma} \label{lemma-3}
Let $m=2$ and so $Y=(Y_1,Y_2)$ and $X=(X^1,X^2)$. 
Then there exist $|\mathcal X_1^1|+|\mathcal Y_1|+|\mathcal Y_1\times\mathcal X^1_1|$ probability distributions $\{X^{2,x_1^1}:x_1^1\in \mathcal X_1^1\}$, $\{X^{1,y_1} :y_1\in \mathcal Y_1\}$ and $\{X^{2,y_1,x^1_1}:y_1\in \mathcal Y_1,x^1_1 \in \mathcal X^1_1 \}$  such that
\begin{align}
I(X_1;Y) = I(X_1^1;Y_1) &+ \sum_{x_1^1}p(x_1^1) I(X_1^{2,x_1^1};Y_1^{x_1^1}) + \sum_{y_1}p(y_1)I(X_1^{1,y_1};Y_2^{y_1}) \\
&+\sum_{y_1}p(y_1)\sum_{x_1^1}p(x_1^1)I(X_1^{2,x_1^1,y_1};Y_2^{x_1^1,y_1})  
\end{align}
where each $Y_1^{x_1^1}$ is the corresponding output random variable of the channel $p(y_1|x^2)$ when the input random variable is $X^{2,x_1^1} \in \mathbb P^2$, where each $Y_2^{x_1^1,y_1}$ is the corresponding output random variable of the channel $p(y_2|x^2)$ when the input random variable is $X^{2,x_1^1,y_1} \in \mathbb P^2$, 
and where each $Y_2^{y_1}$ is the corresponding output random variable of the channel  $p(y_2|x^1)  $ when the input random variable is $X^{1,y_1} \in \mathbb P^1$.
\end{lemma}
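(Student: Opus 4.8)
The plan is to derive the four-term identity by an iterated application of the chain rule for mutual information, followed by the same source-construction device used in the proofs of Lemma \ref{lemma-1} and Lemma \ref{lemma-2}, which converts each \emph{conditional} mutual information into the \emph{unconditional} mutual information of a suitably constructed source lying in $\mathbb P^1$ or $\mathbb P^2$. Since $X_1=(X_1^1,X_1^2)$ and $Y=(Y_1,Y_2)$, I would first split on $X_1$ and then on $Y$ inside each resulting piece:
\begin{align}
I(X_1;Y)
&= I(X_1^1;Y_1,Y_2) + I(X_1^2;Y_1,Y_2\mid X_1^1) \\
&= \underbrace{I(X_1^1;Y_1)}_{(a)} + \underbrace{I(X_1^1;Y_2\mid Y_1)}_{(b)} + \underbrace{I(X_1^2;Y_1\mid X_1^1)}_{(c)} + \underbrace{I(X_1^2;Y_2\mid Y_1,X_1^1)}_{(d)}.
\end{align}
Term $(a)$ is already the first summand. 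Expanding the conditional terms as averages over the conditioning variables gives $(c)=\sum_{x_1^1}p(x_1^1)I(X_1^2;Y_1\mid x_1^1)$, $(b)=\sum_{y_1}p(y_1)I(X_1^1;Y_2\mid y_1)$, and $(d)=\sum_{x_1^1,y_1}p(x_1^1,y_1)I(X_1^2;Y_2\mid x_1^1,y_1)$, which already match the shapes of the second, third, and fourth summands in the claim, and the three index sets $x_1^1$, $y_1$, $(y_1,x_1^1)$ account for the stated count $|\mathcal X_1^1|+|\mathcal Y_1|+|\mathcal Y_1\times\mathcal X^1_1|$ of auxiliary distributions.

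Next, for each fixed value of the conditioning event I would construct an auxiliary source exactly as in Lemma \ref{lemma-1}/Lemma \ref{lemma-2}: degenerate the conditioned coordinate (set its probability to one) and renormalize the remaining conditionals, so that the induced output marginals reproduce the relevant conditional laws. Concretely, for $(c)$ I build $X^{2,x_1^1}\in\mathbb P^2$ whose induced output satisfies $I(X_1^{2,x_1^1};Y_1^{x_1^1})=I(X_1^2;Y_1\mid x_1^1)$; for $(b)$ I build $X^{1,y_1}\in\mathbb P^1$ with $I(X_1^{1,y_1};Y_2^{y_1})=I(X_1^1;Y_2\mid y_1)$; and for $(d)$ I build $X^{2,x_1^1,y_1}\in\mathbb P^2$ with $I(X_1^{2,x_1^1,y_1};Y_2^{x_1^1,y_1})=I(X_1^2;Y_2\mid x_1^1,y_1)$. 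Each construction must be checked to lie in the correct product space $\mathbb P^1$ or $\mathbb P^2$ and to reproduce the target conditional mutual information; this is the routine but bookkeeping-heavy part, carried out by the same direct computation of $q(y)$ and $q(\cdot,y)$ as in the two preceding lemmas.

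Two points require genuine care, and I expect them to be the main obstacle. First, in the cross terms $(c)$ and $(d)$ the output index and the conditioning dataset differ, so the relevant transition law is not one of the given channels but the \emph{induced} channel obtained by marginalizing the other dataset through the correlation, e.g.\ $p(y_1\mid x^2,x_1^1)=\sum_{x^1}p(y_1\mid x^1)\,p(x^1\mid x^2,x_1^1)$; I must verify this induced channel is exactly the one whose output random variable is denoted $Y_1^{x_1^1}$ (respectively $Y_2^{x_1^1,y_1}$) in the statement. Second, the chain rule produces $(d)$ with the \emph{joint} weight $p(x_1^1,y_1)$, whereas the claim displays the product $p(y_1)\,p(x_1^1)$; these are reconciled by reading the construction as nested—after degenerating $x_1^1$, the symbol $p(y_1)$ denotes the $Y_1$-marginal \emph{of that conditioned source}, which equals $p(y_1\mid x_1^1)$, so that $\sum_{x_1^1}p(x_1^1)\sum_{y_1}p(y_1\mid x_1^1)=\sum_{x_1^1,y_1}p(x_1^1,y_1)$. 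Establishing this nested reading cleanly, and confirming the induced-channel identifications, is where the real work lies; once done, substituting the four constructed unconditional mutual informations back into the decomposition yields the stated identity.
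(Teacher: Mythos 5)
Your proposal follows essentially the same route as the paper: an iterated chain-rule decomposition of $I(X_1;Y)$ into the same four terms $I(X_1^1;Y_1)$, $I(X_1^2;Y_1\mid X_1^1)$, $I(X_1^1;Y_2\mid Y_1)$, $I(X_1^2;Y_2\mid Y_1,X_1^1)$ (the paper splits on $Y$ first and then on $X_1$, you split in the opposite order, which yields identical terms), followed by the source-construction device of Lemmas \ref{lemma-1} and \ref{lemma-2} to convert each conditional mutual information into an unconditional one for an auxiliary source in $\mathbb P^1$ or $\mathbb P^2$. The two subtleties you flag --- the induced channels such as $p(y_1\mid x^2)$ for the cross terms, and reading $p(y_1)p(x_1^1)$ as the nested weights $p(y_1)p(x_1^1\mid y_1)=p(x_1^1,y_1)$ --- are real and are in fact glossed over by the paper's own proof, which simply cites Lemmas \ref{lemma-1} and \ref{lemma-2} for each equality.
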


\begin{proof}
The proof is a combination of the proofs of Lemma \ref{lemma-1} and Lemma \ref{lemma-2}. We have
\begin{align*}
&I(X_1;Y) \\
=& I(X_1;Y_1) + I(X_1;Y_2|Y_1) \\
=_a& I(X_1^1;Y_1) + I(X_1^2;Y_1|X_1^1) + \sum_{y_1}p(y_1)I(X_1^{y_1};Y_2^{y_1})  \\
=_b& I(X_1^1;Y_1) + I(X_1^2;Y_1|X_1^1) + \sum_{y_1}p(y_1)I(X_1^{1,y_1};Y_2^{y_1}) 
+\sum_{y_1}p(y_1)\sum_{x_1^1}p(x_1^1)I(X_1^{2,x_1^1,y_1};Y_2^{x_1^1,y_1}) \\
=_c& I(X_1^1;Y_1) + \sum_{x_1^1}p(x_1^1) I(X_1^{2,x_1^1};Y_1^{x_1^1}) + \sum_{y_1}p(y_1)I(X_1^{1,y_1};Y_2^{y_1}) 
+\sum_{y_1}p(y_1)\sum_{x_1^1}p(x_1^1)I(X_1^{2,x_1^1,y_1};Y_2^{x_1^1,y_1}) , 
\end{align*}
where the equalities $=_a, =_b$ and $=_c$ are due to Lemma \ref{lemma-2} and Lemma \ref{lemma-1}.

The claim is proved. 
\qed
\end{proof}

\begin{theorem} \label{theorem-3}

To each channel $p(y_i|x^j)$, assume $C_{i}^{\mathbb P_b^j}$ be its individual channel capacity with respect to $\mathbb P_b^j$, $i,j\in [2]$. 
Assume the channel $p(y_j|x^j)$  is $(b,\delta_j)$-bounded, $j\in [2]$. 
Then,  the composition channel $p(y|x)$ satisfies $(\sum_{j\in [2]}\delta_j+C_{1}^{\mathbb P_b^2}+C_{2}^{\mathbb P_b^1})$-general composition privacy with respect to $\prod_{j\in [2]}\mathbb P_b^j$. 
\end{theorem}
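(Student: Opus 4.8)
The plan is to reduce everything to Lemma \ref{lemma-3} together with the balance theorem (Theorem \ref{lemma-5}), exactly as Theorems \ref{theorem-2} and \ref{theorem-1} were obtained from Lemmas \ref{lemma-2} and \ref{lemma-1}. Lemma \ref{lemma-3} is stated for the first individual, but since its construction is symmetric in the individuals it suffices to bound $I(X_1;Y)$; the same bound then holds for every $I(X_i;Y)$ after relabeling, and hence for $C_1=\max_{X\in\Delta,i}I(X_i;Y)$. So the whole proof amounts to bounding the four summands that Lemma \ref{lemma-3} produces.

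First I would read off from Lemma \ref{lemma-3} the decomposition $I(X_1;Y)=T_1+T_2+T_3+T_4$ and note which channel each summand belongs to. The term $T_1=I(X_1^1;Y_1)$ is a mutual information for the diagonal channel $p(y_1|x^1)$; the last term $T_4$ is an average of mutual informations for the diagonal channel $p(y_2|x^2)$; and the two middle terms $T_2,T_3$ are averages of mutual informations for the cross channels $p(y_1|x^2)$ and $p(y_2|x^1)$ respectively. The key structural observation is that each summand is a single-individual, single-channel mutual information, so it is dominated by that channel's individual channel capacity — this is what converts the one displayed identity of Lemma \ref{lemma-3} into four capacity bounds.

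Next I would bound the four terms. For the two diagonal terms the source feeding the capacity is, after the Lemma \ref{lemma-2}/Lemma \ref{lemma-1} reductions, only guaranteed to lie in $\mathbb P^1$ (resp.\ $\mathbb P^2$) and may have entropy below $b$; I therefore bound $T_1,T_4$ by the capacities of $p(y_1|x^1),p(y_2|x^2)$ with respect to $\mathbb P^1,\mathbb P^2$ and then invoke Theorem \ref{lemma-5} to trade each of these for the $\mathbb P_b$-capacity plus the corresponding balance, giving $T_1\le C_1^{\mathbb P_b^1}+\delta_1\le\epsilon_1+\delta_1$ and $T_4\le C_2^{\mathbb P_b^2}+\delta_2\le\epsilon_2+\delta_2$. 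For the two cross terms I would bound $T_2\le C_1^{\mathbb P_b^2}$ and $T_3\le C_2^{\mathbb P_b^1}$. Summing the four estimates yields $I(X_1;Y)\le\epsilon_1+\epsilon_2+\delta_1+\delta_2+C_1^{\mathbb P_b^2}+C_2^{\mathbb P_b^1}$, which is precisely the asserted $(\sum_{j}\delta_j+C_1^{\mathbb P_b^2}+C_2^{\mathbb P_b^1})$-general composition bound.

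The hard part will be the two cross terms. Unlike the diagonal channels, the cross channels $p(y_1|x^2)$ and $p(y_2|x^1)$ carry no assumed balance function, so I cannot repair a low-entropy reduced source by paying an extra $\delta$; bounding $T_2,T_3$ by the $\mathbb P_b$-capacities $C_1^{\mathbb P_b^2},C_2^{\mathbb P_b^1}$ is therefore legitimate only if the reduced sources $X^{2,x_1^1}$ and $X^{1,y_1}$ still lie in $\mathbb P_b^2,\mathbb P_b^1$ — yet conditioning on $X_1^1=x_1^1$ or on $Y_1=y_1$ can in principle push the entropy of the remaining dataset below $b$. The delicate step is thus to argue that these conditioned row-marginals retain entropy at least $b$ (or to read $C_1^{\mathbb P_b^2},C_2^{\mathbb P_b^1}$ as maxima already taken over the admissible reduced sources). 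These cross terms are exactly the new feature of general composition: they quantify how the output $Y_1$, computed only from dataset $1$, can still leak information about an individual's record in dataset $2$ through the inter-dataset correlation, which is why the bound cannot collapse to the clean $\sum_{j}(\epsilon_j+\delta_j)$ and why the basic-versus-general gap flagged in the introduction appears.
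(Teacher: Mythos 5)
Your proposal is correct and follows exactly the route the paper intends: the paper's own proof is just the one line ``direct corollary of Lemma~\ref{lemma-3} and Theorem~\ref{lemma-5},'' and your four-term decomposition --- diagonal terms bounded by $\epsilon_j+\delta_j$ via the balance function, cross terms bounded by $C_1^{\mathbb P_b^2}$ and $C_2^{\mathbb P_b^1}$ --- is precisely the argument that one-liner compresses. The subtlety you flag about whether the conditioned sources $X^{2,x_1^1}$ and $X^{1,y_1}$ actually remain in $\mathbb P_b^2$ and $\mathbb P_b^1$ is a genuine gap that the paper itself silently passes over, so raising it is to your credit rather than a defect of your write-up.
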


\begin{proof}
The claim is a direct corollary of Lemma \ref{lemma-3} and Theorem \ref{lemma-5}.
\qed
\end{proof}

Note that the claims of Lemma \ref{lemma-3} and Theorem \ref{theorem-3} can be generalized to the cases of $m\ge 2$, whose  proofs are similar to the proofs of Lemma \ref{lemma-3}  and Theorem \ref{theorem-3}. Furthermore, one needs to be stressed is that the parameter 
\begin{align}
c=\sum_{j\in [2]}\delta_j+C_{1}^{\mathbb P_b^2}+C_{2}^{\mathbb P_b^1}
\end{align}
in Theorem \ref{theorem-3} can't be reduced to $0$ when $b \rightarrow 0$ whereas the case in Theorem \ref{theorem-1} is $c \rightarrow 0$ when $b \rightarrow 0$. 
We don't know how to explain this phenomenon; that is, we don't know whether this phenomenon implies that the information privacy model is not secure. Our belief is that it doesn't.
Of course, since the result in Theorem \ref{theorem-3} is a new thing, further results and how to explain these results need to be explored. 
Furthermore, it seems that Theorem \ref{theorem-3} is related to the continue observation problems\cite{DBLP:conf/stoc/DworkNPR10}.

\section{Conclusion}

In this paper we proved the group privacy and the composition privacy properties of the information privacy model. Although our results contain the unknown balance functions and then we can't seem to use these results, there in fact exists a way to use them due to the monotonicity of the balance functions.
Specifically, we first set the needed value of $\delta$ (but not $b$), then there must \emph{exist} a largest $b$ such that the equation $\delta(b) \le \delta$ (even though we can't find it). Then we can freely use all the four theorems of this paper. The only drawback of this way is that we don't know the value of $b$ and then can't estimate the extent of reasonability of Assumption \ref{assumption-1}.

This paper leaves several unsolved problems. First, 
our results to the group privacy and the composition privacy are related to the balance function of the corresponding privacy channel, which seems to be rather complex than those of the differential privacy model. Whether the balance function can be deleted from these results is unknown. Second, how to evaluate the balance function of each privacy channel and whether the balance function is monotonic to the number of individuals $n$ are interesting problems.
Third,  how to explain the result of the general composition privacy is an important problem.

\bibliographystyle{unsrt}

\end{document}